\documentclass[graybox]{SNmult}

\usepackage{type1cm}        
\usepackage{makeidx}         
\usepackage{graphicx}        
\usepackage{multicol}        
\usepackage[bottom]{footmisc}

\usepackage{newtxtext}       %
\usepackage[varvw]{newtxmath}       

\makeindex             

\usepackage{url}

\usepackage{mathtools}
\usepackage{subcaption}
\usepackage[hidelinks]{hyperref}
\usepackage[nameinlink, capitalise]{cleveref}
\crefname{equation}{}{}

\newcommand{\forstermb}[1]{\mathbf{#1}}
\newcommand{\forstermr}[1]{\mathrm{#1}}
\newcommand{\forsterbs}[1]{\boldsymbol{#1}}
\newcommand{\forsterT}{{\!\top}}
\newcommand{\forsterdd}[1]{\frac{\mathrm{d}}{\mathrm{d} #1}}
\newcommand{\forsterpp}[2][]{\frac{\partial #1}{\partial #2}}

\newcommand{\forstermbh}[1]{\hat{\mathbf{#1}}}

\newcommand{\forsterA}[1][]{\forstermb{A}_\forstermr{#1}}
\newcommand{\forsterAT}[1][]{\forstermb{A}_\forstermr{#1}^{\forsterT}}
\newcommand{\forsterqC}{\forstermb{q}_\forstermr{C}}
\newcommand{\forsterphiL}{\forsterbs{\phi}_\forstermr{L}}
\newcommand{\forstergR}{\forstermb{g}_\forstermr{R}}

\newcommand{\forstervphi}{\forsterbs{\varphi}}

\newcommand{\forsteriL}{\forstermb{i}_\forstermr{L}}

\newcommand{\forsteriV}{\forstermb{i}_\forstermr{V}}
\newcommand{\forsteriVs}{\forstermb{i}_{\forstermr{V_s}}}
\newcommand{\forsteriVc}{\forstermb{i}_{\forstermr{V_c}}}

\newcommand{\forstervs}{\forstermb{v}_\forstermr{s}}
\newcommand{\forstervc}{\forstermb{v}_\forstermr{c}}

\newcommand{\forsteris}{\forstermb{i}_\forstermr{s}}
\newcommand{\forsteric}[1][]{\forstermb{i}_{\forstermr{c}_{#1}}}

\newcommand{\forsterPT}[1][]{\forstermb{P}_\forstermr{#1}^\forsterT}
\newcommand{\forsterQT}[1][]{\forstermb{Q}_\forstermr{#1}^\forsterT}

\newcommand{\forsterW}[1][]{\forstermb{W}_\forstermr{#1}}

\begin{document}
\title*{A non-intrusive approach to index-aware learning}

\author{Peter Förster\orcidID{0000-0003-0603-6082} and\\ Idoia Cortes Garcia\orcidID{0000-0001-6557-103X} and\\ Wil Schilders\orcidID{0000-0001-5838-8579} and\\ Sebastian Schöps\orcidID{0000-0001-9150-0219}}

\institute{Peter Förster and Sebastian Schöps \at Technical University of Darmstadt, Germany, \email{peter.foerster@tu-darmstadt.de}
\and Peter Förster and Idoia Cortes Garcia and Wil Schilders \at Eindhoven University of Technology, The Netherlands}
%
%

\maketitle

\abstract{We present a non-intrusive version of the index-aware learning framework introduced in~\cite{forster2023c}. Index-aware learning itself is an approach for learning the time and parameter dependent solutions of differential-algebraic equations (DAEs), in particular those describing electric circuits. A central feature of the approach is that it ensures the learned solutions to fulfill the inherent constraints of the DAE, such as e.g.~Kirchhoff's laws in the case of electric circuits. This is achieved by leveraging a decoupling of the DAE into its differential and algebraic parts, with the non-intrusive version of the approach additionally relying on results from~\cite{forster2026a,estevez2000b,garcia2022}. We illustrate the approach using a filtered buck converter as an example and compare both the intrusive and non-intrusive versions. The code for the example is openly available at~\cite{forster2026g}.
\keywords{differential-algebraic equations, physics-consistent machine learning, electric circuits}}

\section{Introduction}
\label{sec:forster_int}
Learning the time and parameter dependent solutions of electric circuits is of interest for various applications in the contexts of design verification, optimization or uncertainty quantification. In the case of SPICE-based circuit simulators, electric circuits are usually described with modified nodal analysis (MNA)~\cite{ho1975}.
Allowing for common kinds of controlled sources, see also~\cite[Eq. (3)]{forster2026a}, the equations of MNA can be written in the following form, where the $\forstermb{A}_\star$ denote incidence matrices capturing the graph structure of the circuit, $\forsterqC$, $\forstergR$ and $\forsterphiL$ are functions modeling capacitors, resistors and inductors, the unknowns are given by the nodal potentials $\forstervphi$, the inductor currents $\forsteriL$ and the currents through (independent and controlled) voltage sources $\forsteriVs$ and $\forsteriVc$, while independent current and voltage sources are denoted by $\forsteris$ and $\forstervs$ and their controlled counterparts are given by $\forsteric$ and $\forstervc$, respectively
\begin{subequations}
    \label{eq:forster_mna}
    \begin{alignat}{2}
        &\forsterA[C] \forsterdd{t} \forsterqC \big( \forsterAT[C] \forstervphi \big) &&\notag\\
        &\quad + \forsterA[R] \forstergR \big( \forsterAT[R] \forstervphi, \forsterAT[C V] \forstervphi, \forsteriL, t \big) &&\notag\\
        &\quad + \forsterA[L] \forsteriL + \forsterA[V_s] \forsteriVs + \forsterA[V_c] \forsteriVc &&\notag\\
        &\quad + \forsterA[I_s] \forsteris(t) + \forsterA[I_c] \forsteric \big( \forsterAT\! \forstervphi, \forsteriL, \forsteriVc, t \big) &&= \forstermb{0} \label{eq:forster_mna.1}\\
        &\forsterdd{t} \forsterphiL(\forsteriL) - \forsterAT[L] \forstervphi &&= \forstermb{0} \label{eq:forster_mna.2}\\
        &\forsterAT[V_s] \forstervphi - \forstervs(t) &&= \forstermb{0} \label{eq:forster_mna.3}\\
        &\forsterAT[V_c] \forstervphi - \forstervc \big( \forsterAT[C V_s] \forstervphi, \forsteriL, t \big) &&= \forstermb{0}. \label{eq:forster_mna.4}
    \end{alignat}
\end{subequations}

On an abstract level, MNA can be characterized as a differential-algebraic equation (DAE)~\cite{tischendorf1999}. Such DAEs differ from the more common ordinary differential equations (ODEs), in that their solutions underlie additional (hidden) constraints~\cite{hairer1996}. (Think e.g.\ of Kirchhoff's laws in the context of electric circuits.) As such, it becomes important to ensure that these constraints are also fulfilled by the learned solutions, to avoid physically invalid predictions.
While there already exists a family of so-called index-aware model order reduction methods which ensure that these constraints are upheld, see e.g.~\cite{ali2014}, they only work well for linear problems. Electric circuits often exhibit strongly nonlinear behavior however, as the example in~\cref{sec:forster_ex} will demonstrate.
In order to bridge this gap, previous works already investigated replacing classical model order reduction methods with machine learning approaches, see e.g.~\cite[Chapters 9 and 10]{benner2021} or~\cite{hesthaven2018}. We now show that in the case of electric circuits, one can obtain a non-intrusive version of the index-aware learning approach from~\cite{forster2023c}.

For completeness, we note that the idea of using neural networks for emulating the behavior of electric circuits goes back over twenty years~\cite{zaabab1995,meijer1996} and more recently, also Gaussian processes~\cite{rasmussen2006} have been employed to this end~\cite{wang2018,sanabria2020}. However all of these approaches typically treat the circuit simulator as a mere black box. In contrast, our approach leverages the mathematical structure of MNA to ensure that the learned solutions also fulfill the underlying constraints.

\section{Non-intrusive index-aware learning}
\label{sec:forster_ial}
The index-aware learning approach from~\cite{forster2023c} consists of three steps. First, the DAE of interest is decoupled into its differential and algebraic parts. Afterwards, only the (parameter dependent) trajectories of the differential variables corresponding to the differential part are learned. Finally, any required algebraic variables are reconstructed using the algebraic part in a third step. This last step is what ultimately ensures that the underlying constraints of the solution are also fulfilled by the predictions.

The key difficulty associated with this approach is efficiently computing the decoupling. The original description in \cite{forster2023c} is based on the dissection index \cite{jansen2014} and generally requires a detailed analysis of the DAE. Fortunately, for electric circuits modeled by~\cref{eq:forster_mna}, the procedure described in~\cite{forster2026a} provides such a decoupling in a topological, i.e.~graph-based and non-intrusive, manner. Some additional (topological) assumptions are required for this topological decoupling to exist, see~\cite[Assumptions 1, 2 and 3]{forster2026a}. These e.g.~ensure that the index of the circuit does not exceed two, compare also~\cite{estevez2000a}.

This alone does not yet suffice for a completely non-intrusive implementation however, as the reconstruction still requires computational access to the algebraic part. Thankfully, also this obstacle can be removed when dealing with electric circuits described by~\cref{eq:forster_mna}, by leveraging the results from~\cite{estevez2000b,garcia2022}. There, it was proved that two implicit Euler steps always suffice to turn an inconsistent initial condition into a consistent solution state for the equations of MNA as given in \cref{eq:forster_mna}, so long as the index does not exceed two, compare also the assumptions listed in~\cite[Theorem 4.1]{estevez2000a}. Instead of intrusively reconstructing the algebraic variables using the algebraic part, one can therefore simply perform one or two small implicit Euler steps depending on the index of the original system, while using zero initial conditions for the algebraic variables. One should note that this trades off computational effort for being non-intrusive however, as the implicit Euler steps need to performed with the full system, whereas individual algebraic variables may usually be reconstructed based on a small subset of the equations making up the algebraic part, due to the local nature of most of the equations making up~\cref{eq:forster_mna}.
Lastly, we note that the implicit Euler method is available as a standard solver option in all common SPICE-based circuit simulators, ensuring that this approach is truly non-intrusive in practice.

\section{Error analysis}
\label{sec:forster_ea}
In addition to ensuring consistent, i.e.~constraint-fulfilling, predictions, index-aware learning may also provide other benefits. Firstly, the (output) dimension of the learning problem is reduced, since only the differential variables have to be learned.
(In fact, also the number of parameters (input dimension) can decrease, as the differential variables may not depend on all parameters).
Moreover, the trajectories of the differential variables are often easier to learn than those of the algebraic variables, since they correspond to solutions of integration problems, rather than purely algebraic or even differentiation problems, which may arise for higher index components.
In order to illustrate these points and to provide a more rigorous error analysis, we now introduce the decoupling in formal terms.

The topological decoupling from~\cite{forster2026a} provides a linear variable transformation
\begin{align*}
    \forstermb{x} = \forstermb{T} \begin{bmatrix}
        \forstermb{x}_0\\
        \forstermb{x}_1\\
        \forstermb{x}_2
    \end{bmatrix}
\end{align*}
that splits the unknowns $\forstermb{x}^\forsterT = [\forstervphi^\forsterT, \forsteriL^\forsterT, \forsteriV^\forsterT]$ of~\cref{eq:forster_mna} into the differential (index zero) variables $\forstermb{x}_0$ and the algebraic (index one and two) components $\forstermb{x}_1$ and $\forstermb{x}_2$. Furthermore, the decoupling also provides a way to split~\cref{eq:forster_mna} itself into the following three systems
\begin{subequations}
    \label{eq:forster_dec}
    \begin{alignat}{2}
        \forsterdd{t} \forstermb{x}_0 &= \forstermb{f}_0(\forstermb{x}_0, \forstermb{x}_1, t, \forstermb{p}) \label{eq:forster_dec.1}\\
        \forstermb{0} &= \forstermb{f}_1(\forstermb{x}_0, \forstermb{x}_1, t, \forstermb{p}), \quad &\det \left( \forsterpp[\forstermb{f}_1]{\forstermb{x}_1} \right) &\neq 0 \label{eq:forster_dec.2}\\
        \forstermb{0} &= \forstermb{f}_2(\forstermb{x}_0', \forstermb{x}_1', \forstermb{x}_0, \forstermb{x}_1, \forstermb{x}_2, t, \forstermb{p}), \quad &\det \left( \forsterpp[\forstermb{f}_2]{\forstermb{x}_2} \right) &\neq 0, \label{eq:forster_dec.3}
    \end{alignat}
\end{subequations}
where we also explicitly included the parameters $\forstermb{p}$ and the $\cdot'$ denote time derivatives. We note that the conditions at the ends of~\cref{eq:forster_dec.2,eq:forster_dec.3} guarantee that the respective systems are locally solvable for the algebraic variables by the implicit function theorem.

As described in~\cref{sec:forster_ial}, our approach makes use of this decoupling by only learning the differential variables $\forstermb{x}_0$ and then reconstructing any required algebraic variables in an appropriate fashion. In this context, two different errors may play a role in evaluating the quality of the learned approximation and the performance of our approach. Firstly, we consider the consistency error
\begin{align*}
    e_\forstermr{c} \coloneqq \begin{Vmatrix}
        \forstermb{f}_1(\forstermb{x}_0, \forstermb{x}_1, t, \forstermb{p})\\
        \forstermb{f}_2(\forstermb{x}_0', \forstermb{x}_1', \forstermb{x}_0, \forstermb{x}_1, \forstermb{x}_2, t, \forstermb{p})
    \end{Vmatrix}
\end{align*}
that measures how accurately the inherent constraints of the DAE are fulfilled and secondly, we consider the approximation error
\begin{align*}
    e_\forstermr{a} \coloneqq \| \hat{x} - x \|
\end{align*}
between some component $x$ of the solution vector and its learned approximation $\hat{x}$. In both cases $\| \cdot \|$ stands in for some appropriate norm. For the examples in~\cref{sec:forster_ex}, we will work with the 2-norm based on a finite number of random samples.

While the consistency error is very directly tackled by our approach, we also provide some expectation for the reconstruction errors of the algebraic variables, assuming an approximation $\forstermbh{x}_0$ of the differential variables with
\begin{align*}
    \| \forstermbh{x}_0 - \forstermb{x}_0 \| \leq \varepsilon.
\end{align*}
Treating $\forstermb{x}_1$ as an implicit function of $\forstermb{x}_0$ and assuming $\forstermb{f}_1$ smooth enough in~\cref{eq:forster_dec.2}, the implicit function theorem implies
\begin{align*}
    \forsterpp[\forstermb{x}_1]{\forstermb{x}_0} = -\left( \forsterpp[\forstermb{f}_1]{\forstermb{x}_1} \right)^{-1} \forsterpp[\forstermb{f}_1]{\forstermb{x}_0}.
\end{align*}
Performing a Taylor expansion to first order about the true solution $\forstermb{x}_0$ then yields
\begin{align*}
    \forstermb{x}_1(\forstermbh{x}_0) \approx \forstermb{x}_1(\forstermb{x_0}) + \forsterpp[\forstermb{x}_1]{\forstermb{x}_0} (\forstermbh{x}_0 - \forstermb{x}_0).
\end{align*}
Assuming $\varepsilon$ small enough, this gives
\begin{align*}
    \| \forstermbh{x}_1 - \forstermb{x}_1 \| \leq L_{1,0} \| \forstermbh{x}_0 - \forstermb{x}_0 \| \leq L_{1,0} \varepsilon
\end{align*}
as an estimate for the approximation error of the index one algebraic variables, where $L_{1,0} \coloneqq \sup \forsterpp[\forstermb{x}_1]{\forstermb{x}_0}$. A similar argument leads to
\begin{align*}
    \| \forstermbh{x}_2 - \forstermb{x}_2 \| \leq L_{2,0'} \| \forstermbh{x}_0' - \forstermb{x}_0' \| + L_{2,1'} \| \forstermbh{x}_1' - \forstermb{x}_1' \| + (L_{2,0} + L_{2,1} L_{1,0}) \varepsilon
\end{align*}
as a bound for the approximation error of the index two variables, where the constants $L_{i,j}$ are defined analogously to $L_{1,0}$ before. Here, the situation is of course complicated by the derivative terms, which cannot be bounded based on information about the approximation error of the differential variables alone.

\section{Example}
\label{sec:forster_ex}
In the following, we illustrate the workflow of our approach and its theoretical properties using the filtered buck converter from~\cref{fig:forster_filtered_buck} as an example.
The corresponding decoupled system, compare also~\cref{eq:forster_dec}, is given by
\begin{align*}
    \forsterdd{t} \begin{bmatrix}
        \varphi_4\\
        i_\forstermr{L_f}\\
        i_\forstermr{L}
    \end{bmatrix} &= \begin{bmatrix}
        (i_\forstermr{L} - G \varphi_4) / C\\
        (\varphi_1 - \varphi_2) / L_\forstermr{f}\\
        (\varphi_3 - \varphi_4) / L
    \end{bmatrix}\\
    \forstermb{0} &= \begin{bmatrix}
        \varphi_1 - v_\forstermr{s}(t)\\
        G_\forstermr{f} \varphi_2 + g_\forstermr{S}(t) (\varphi_2 - \varphi_3) - i_\forstermr{L_f}\\
        -g_\forstermr{S}(t) (\varphi_2 - \varphi_3) + g_\forstermr{D}(-\varphi_3) \varphi_3 + i_\forstermr{L}
    \end{bmatrix}\\
    0 &= C_\forstermr{f} \forsterdd{t} \varphi_1 + i_\forstermr{L_f} + i_\forstermr{V_s},
\end{align*}
where the transistor is modeled as a time dependent resistor $g_\forstermr{S}(t)$ that interpolates between low and high resistance states based on a switching frequency of $500\,\forstermr{kHz}$ and a duty cycle of $12/5$, while the diode is similarly modeled as a nonlinear resistor interpolating between low and high resistance states based on the applied voltage. For further details, we refer to the code~\cite{forster2026g}. The left plot in~\cref{fig:forster_filtered_buck_simulation_cod} shows exemplary trajectories for some of the solution variables of the filtered buck converter. It also illustrates a common feature of electric circuit solutions, namely that the trajectories evolve on multiple time scales.

\begin{figure}[t]
    \centering
    \includegraphics[height=0.15\textheight]{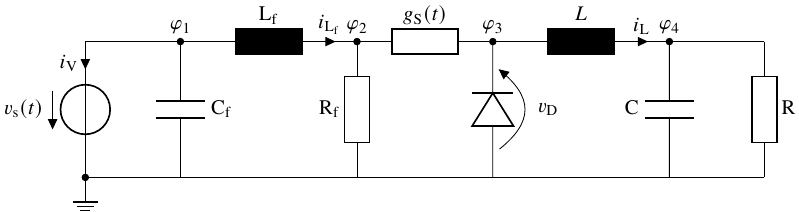}
    \caption{Filtered buck converter with the transistor (switch) modeled as a time dependent resistor $g_\forstermr{S}(t)$ and the diode modeled by a nonlinear resistor $g_\forstermr{D}(v_\forstermr{D})$.}
    \label{fig:forster_filtered_buck}
\end{figure}

For our example, we learn the solution of the filtered buck converter using Gaussian processes (GPs) and more specifically, we employ a simple sequential design strategy as e.g.~outlined in~\cite[Section 8.2]{rasmussen2006} or~\cite{burnaev2015,forster2025c}, for each solution component individually.
The right side of~\cref{fig:forster_filtered_buck_simulation_cod} illustrates the curse of dimensionality, which represents a further difficulty that arises when considering the dependence of the solution on additional parameters.
Here, we first learn the trajectory of $\varphi_4$ only over time (1D), then while also varying the input voltage $v_\forstermr{s}$ between $11$ and $13\,\forstermr{V}$ (2D) and finally when additionally varying the load resistance between $9$ and $11\,\forstermr{\Omega}$ (3D). One can clearly see that the rate of convergence deteriorates significantly when moving from 2D to 3D, which can be attributed to the curse of dimensionality. This simultaneously showcases that different parameters may impact the solution, and therefore the convergence, to a different extent.

\begin{figure}[b]
    \centering
    \includegraphics[height=0.35\textwidth]{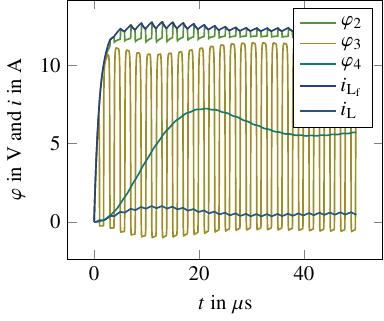} \hspace{3em} \includegraphics[height=0.35\textwidth]{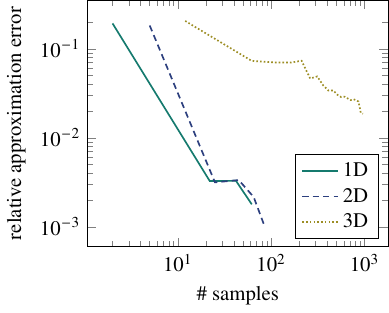}
    \caption{Exemplary solution trajectories for the filtered buck converter from~\cref{fig:forster_filtered_buck} (left) and illustration of the curse of dimensionality when learning $\varphi_4$ depending on different numbers of parameters (right).}
    \label{fig:forster_filtered_buck_simulation_cod}
\end{figure}

In the following, we stick with the 3D case and learn all solution variables, but only up to a relatively large tolerance of $5\,\%$. This ensures the computational effort stays manageable and essentially corresponds to only learning the long time scale (low frequency) content of the solution.
We can now investigate the behavior of the two error measures introduced in~\cref{sec:forster_ea}, when evaluating the GP for a random parameter combination that was not part of the training data.
First, we consider the consistency error, which can be seen on the left of~\cref{fig:forster_filtered_buck_consistency_approximation}. As expected, we see that the consistency error is large when learning all variables directly (learned), whereas it stays below or at the level of the consistency error of the simulation (simulated) when reconstructing based on the algebraic part (algebraic) or using the implicit Euler method (Euler). Here, we used a time step of $10\,\forstermr{ns}$ for the reference simulation and training data generation and a time step of $10\,\forstermr{ps}$ for the reconstruction. For the nonlinear solver, we used a relative tolerance of $10^{-6}$ and an absolute tolerance of $10^{-12}$.

\begin{figure}[t]
    \centering
    \includegraphics[height=0.35\textwidth]{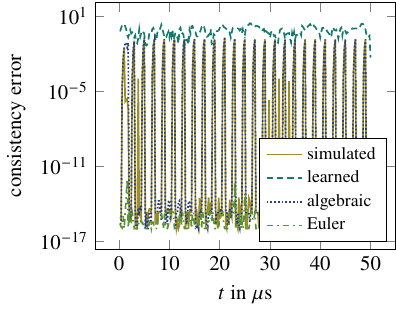} \hspace{3em} \includegraphics[height=0.35\textwidth]{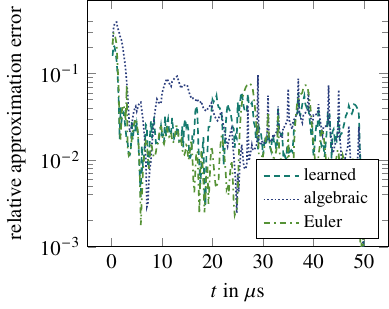}
    \caption{Consistency and approximation errors.}
    \label{fig:forster_filtered_buck_consistency_approximation}
\end{figure}

The corresponding relative approximation errors can be seen on the right of~\cref{fig:forster_filtered_buck_consistency_approximation}. The important thing to note is that the errors based on either reconstruction method behave similarly to those of the directly learned solutions, indicating that the reconstructions induce an error on the same order as that of the learning, in line with the argument from~\cref{sec:forster_ea}. We further note that the advantage of performing the reconstruction over directly learning the solution is twofold. Firstly, one only has to learn the differential variables $\varphi_4$, $i_\forstermr{L_f}$ and $i_\forstermr{L}$, while secondly, these variables also exhibit easier to learn behavior than e.g.~$\varphi_2$ and $\varphi_3$, as illustrated on the left of~\cref{fig:forster_filtered_buck_simulation_cod}.

Lastly, we collect some brief remarks about computational complexity. Depending on the number of samples required for achieving a relative approximation error of $5\,\%$, predicting a complete trajectory based on the GP takes between $0.1$ and $3.2\,\forstermr{s}$, while evaluating the reconstruction at a single point in time takes around $0.3\,\forstermr{ms}$. In comparison, integrating the circuit equations in time to obtain a similar relative error takes only $0.2\,\forstermr{s}$.
As a consequence, the GP even loses time compared to a conventional simulation in this case, however there are a few important caveats about this. First, since we use an individual GP for each solution variable, they can all be evaluated in parallel, such that even the (comparably expensive) prediction cost may win out over conventional methods for large enough circuits.
Furthermore, one may preprocess the training data to extract the low frequency components more explicitly, using e.g.~moving averages or Fourier coefficients, to speed up the predictions.
Finally, we emphasize that our approach can be combined with any approximation method of choice, such that GPs may be swapped with any other method that outperforms them in the given context.

\section*{Acknowledgment}
This work is supported by the Graduate School CE within the Centre for Computational Engineering at Technische Universit{\"a}t Darmstadt and the ECSEL Joint Undertaking (JU) under grant No.~101007319.

\bibliographystyle{spmpsci}
\bibliography{/home/peter/phd/biblio}

\clearpage
\section*{Appendix}
On top of the already mentioned benefits of constraint-fulfilling predictions and reducing the output dimension of the learning problem, index-aware learning may also decrease the input dimension of the learning problem, as mentioned in~\cref{sec:forster_ea}. This was previously indicated in~\cite{forster2023c}, but not rigorously analyzed there. Formally speaking, this decrease in the input dimension occurs when the differential part~\cref{eq:forster_dec.1} only depends on a subset of the parameters $\forstermb{p}$, such that
\begin{align*}
    \forstermb{f}_0(\forstermb{x}_0, \forstermb{x}_1, t, \forstermb{p}) = \forstermb{f}_0(\forstermb{x}_0, \forstermb{x}_1, t, \forstermb{p}_0),
\end{align*}
where the differential parameters $\forstermb{p}_0$ are a subset of all the parameters $\forstermb{p}$. This then implies that the remaining \emph{algebraic} parameters $\forstermb{p}_{1,2}$, need not be considered when learning the differential variables $\forstermb{x}_0$, as the latter only depend on $\forstermb{p}_0$.

In order to better understand when this occurs in the context of MNA, we now provide a topological characterization of when the values of particular circuit elements\footnote{This includes nonlinear elements and cases where the element model depends on external parameters.} become algebraic parameters $\forstermb{p}_{1,2}$. Analogous to the topological decoupling from~\cite{forster2026a}, the $\forstermb{p}_{1,2}$ may then be identified by only looking at the graph of the circuit. The characterization is split into the following three propositions, that respectively state which capacitors, inductors and resistors only contribute algebraic parameters.

\begin{proposition}[Algebraic capacitors]
    \label{prop:ac}
    Capacitors contribute only algebraic parameters if there exists a $\forstermr{V_s}$-only path connecting their terminals, i.e.~a path that consists of only independent voltage sources.
\end{proposition}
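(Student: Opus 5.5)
The plan is to show directly that the branch voltage of such a capacitor is a known function of time, $\forstervs(t)$, and so its charge is not a dynamic state. Then its parameters can enter only equations that determine algebraic components, i.e.\ \cref{eq:forster_dec.2,eq:forster_dec.3}, and never $\forstermb{f}_0$ in \cref{eq:forster_dec.1}.

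First, I would use the $\forstermr{V_s}$-only path. Let $\forstermb{a}$ be the incidence column of the capacitor, and let the path use a subset of the voltage-source branches with orientation signs collected in a vector $\forstermb{s}$. Then $\forstermb{a} = \forsterA[V_s] \forstermb{s}$, since a capacitor closing a cycle with the path is a linear combination of the path's incidence columns. Combined with \cref{eq:forster_mna.3}, this gives
\begin{align*}
    \forstermb{a}^\forsterT \forstervphi = \forstermb{s}^\forsterT \forsterAT[V_s] \forstervphi = \forstermb{s}^\forsterT \forstervs(t).
\end{align*}
So the capacitor voltage, and hence its charge $q(\forstermb{s}^\forsterT \forstervs(t))$, is prescribed. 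The capacitor therefore belongs to a CV-loop made only of independent sources. By the topological decoupling of~\cite{forster2026a}, the corresponding direction of $\forsterAT[C]\forstervphi$ is not part of $\forstermb{x}_0$. Instead, its time derivative enters only through the index-two equation \cref{eq:forster_dec.3}, as in the example, where $C_\forstermr{f}$ appears only in the equation for $i_\forstermr{V_s}$.

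Second, I would check where the capacitor's constitutive function appears in \cref{eq:forster_mna.1}. It enters as $\forstermb{a}\, \forsterdd{t} q(\forstermb{a}^\forsterT\forstervphi)$, which equals $\forstermb{a}\, q'(\cdot)\, \forstermb{s}^\forsterT \forstervs'(t)$, an explicitly time-dependent current injection. Using $\forstermb{a} = \forsterA[V_s]\forstermb{s}$, this term lies in the range of $\forsterA[V_s]$. It can therefore be absorbed by a redefinition of the source currents, $\forsteriVs \mapsto \forsteriVs + \forstermb{s}\, \forsterdd{t} q$. After this substitution, the capacitor's parameters appear only in the equation determining $\forsteriVs$, which is an $\forstermb{x}_2$ (or $\forstermb{x}_1$) component. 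The remaining equations, and in particular the projections forming \cref{eq:forster_dec.1,eq:forster_dec.2} that determine $\forstermb{x}_0$ and $\forstermb{x}_1$, no longer contain it. This parallels the projector structure of~\cite{forster2026a}: the equations for $\forstermb{x}_0,\forstermb{x}_1$ are obtained by multiplying \cref{eq:forster_mna.1} with projectors onto $\ker \forsterAT[V_s]$-type complements, and these annihilate the range of $\forsterA[V_s]$.

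The main obstacle is making the last step rigorous within the specific decoupling of~\cite{forster2026a}. One must verify that the projectors used there annihilate $\forstermb{a}$ whenever $\forstermb{a}\in\operatorname{im}\forsterA[V_s]$. One must also verify that the capacitor does not enter indirectly through the capacitive matrix $\forsterA[C]\,\mathbf{C}\,\forsterAT[C]$ used to define the differential projection, which could otherwise couple its value into $\forstermb{f}_0$. My first attempt would be to show that the $\forstermb{x}_0$-directions are chosen in the complement of the $\forstermr{CV}$-loop directions, so the capacitor's contribution to that matrix is multiplied by $\forstermb{a}^\forsterT$ applied to $\forstermb{x}_0$-directions, which vanishes. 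Controlled sources also need care, since $\forstermb{a}^\forsterT\forstervphi$ may feed into $\forstervc$ or $\forsteric$. However, only its value $\forstermb{s}^\forsterT\forstervs(t)$ enters there, not the capacitor parameters, so this does not break the argument.
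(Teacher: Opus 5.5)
Your proposal is correct and is essentially the paper's argument: the $\forstermr{V_s}$-only path gives $\forsterQT[V_s]\forstermb{a}=\forstermb{0}$ (you derive this directly from $\forstermb{a}=\forsterA[V_s]\forstermb{s}$, whereas the paper cites the proof of Proposition~3 in~\cite{forster2026a}), so the capacitor drops out of the differential part, which sees capacitors only through $\forsterPT[C]\forsterQT[V_s]\forsterA[C]$, and it cannot reach $\forstermb{x}_0$ through $\forstermb{x}_1$ since capacitors do not enter the index-one part at all, leaving it only in the index-two equation for $\forsteriVs$. The ``main obstacle'' you flag is exactly what the paper settles by citing~\cite[Eq.~(18)]{forster2026a} and the algebraic-part equations there; because the basis matrices of that decoupling are purely topological, no capacitance enters the projections themselves, and your concern about the capacitive matrix reduces to the same identity $\forsterQT[V_s]\forstermb{a}=\forstermb{0}$.
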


\begin{proposition}[Algebraic inductors]
    \label{prop:ai}
    Inductors contribute only algebraic parameters if they are part of an $\forstermr{L}$-$\forstermr{I_s}$ cutset, i.e.~a cutset that consists of only inductors and independent current sources.
\end{proposition}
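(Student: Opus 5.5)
We prove \cref{prop:ai} by showing that, for an inductor in an $\forstermr{L}$-$\forstermr{I_s}$ cutset, the current $i_\forstermr{L}$ is not a differential variable. Instead it is fixed by an algebraic (index two) constraint, and the inductor's flux model $\phi_\forstermr{L}$ enters only the equation determining some algebraic variable. We proceed by analogy with \cref{prop:ac}, which is the dual statement (a $\forstermr{V_s}$-only path versus an $\forstermr{L}$-$\forstermr{I_s}$ cutset).

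First, we recall from~\cite{forster2026a} how the topological decoupling treats inductors in $\forstermr{L}$-$\forstermr{I_s}$ cutsets. Take a cutset $\mathcal{K}$ consisting only of inductors and independent current sources. Summing \cref{eq:forster_mna.1} over the nodes on one side of the cut, with a suitable cutset vector $\forstermb{k}$ satisfying $\forstermb{k}^\forsterT \forsterA[C] = \forstermb{k}^\forsterT\forsterA[R] = \forstermb{k}^\forsterT\forsterA[V] = \forstermb{k}^\forsterT\forsterA[I_c] = \forstermb{0}$, leaves only
\begin{align*}
    \forstermb{k}^\forsterT \forsterA[L] \forsteriL + \forstermb{k}^\forsterT \forsterA[I_s] \forsteris(t) = 0 .
\end{align*}
This is a purely algebraic relation among the cutset inductor currents. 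In the decoupling, the corresponding combination of inductor currents is therefore assigned to $\forstermb{x}_1$ rather than $\forstermb{x}_0$, and $\forstermb{f}_1$ contains this relation. By the Kirchhoff current law argument, this relation is parameter-free. Differentiating it yields the hidden constraint that fixes a combination of the voltages across the cutset inductors through \cref{eq:forster_mna.2}. In~\cite{forster2026a} this is where the index-two component $\forstermb{x}_2$ comes from: the node potential combination $\forstermb{k}^\forsterT \ldots$ conjugate to the cutset.

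Second, we track where the element function $\forsterphiL$ of such an inductor appears in \cref{eq:forster_dec}. The inductor's branch equation $\forsterdd{t}\phi_\forstermr{L}(i_\forstermr{L}) = \forsterAT[L]\forstervphi$ is the equation that, in the decoupling, is solved for the index-two potential variable. It involves $i_\forstermr{L}'$, which is an $\forstermb{x}_1'$ term, consistent with the structure of \cref{eq:forster_dec.3}. We must check that $\forstermb{f}_0$ receives no contribution from this inductor's parameters. The plan is to use the projector/basis structure of the transformation $\forstermb{T}$ from~\cite{forster2026a}. The rows of the inductor equation that feed $\forstermb{f}_0$ are those projected onto the complement of the $\forstermr{L}$-$\forstermr{I_s}$ cutset space. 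These rows do not involve cutset inductors in their derivative terms, because the differential inductor currents are exactly the ones not in such cutsets.

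The main obstacle is the case where an inductor belongs to a cutset together with other inductors that are not all in $\forstermr{L}$-$\forstermr{I_s}$ cutsets, or to several overlapping cutsets. In that case the differential inductor current may be a combination mixing cutset and non-cutset branches, and $\phi_\forstermr{L}$ might leak into $\forstermb{f}_0$ through the choice of basis. We would first try to choose the basis of inductor currents adapted to the cutsets, via a tree/cotree splitting of the $\forstermr{L}$-$\forstermr{I_s}$ subgraph contracted appropriately. Any inductor in an $\forstermr{L}$-$\forstermr{I_s}$ cutset then lies entirely in the algebraic block. If this fails for coupled (nondiagonal) inductance models, we would restrict the claim to uncoupled inductors, whose models depend only on their own current.
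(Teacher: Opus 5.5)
Your first step is fine. For a cutset vector $\forstermb{k}$, KCL gives the parameter-free relation $\forstermb{k}^\forsterT\forsterA[L]\forsteriL+\forstermb{k}^\forsterT\forsterA[I_s]\forsteris(t)=0$, and differentiating it produces the index-two potential. The gap is the step where you assert that ``the differential inductor currents are exactly the ones not in such cutsets'' and plan an adapted basis so that every cutset inductor lies entirely in the algebraic block. A single cutset relation removes only one combination of the cutset currents. The complementary combinations can stay dynamic and carry the cutset inductors' models. Concretely, take nodes $a,b,c$ with $b$ grounded, $L_1$ from $a$ to $b$, $L_2$ from $a$ to $c$, and a resistor $R$ from $c$ to $b$. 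Then $\{L_1,L_2\}$ is an inductor-only cutset and KCL at $a$ gives $i_{L_1}=-i_{L_2}$. Subtracting the two inductor equations, however, yields $(L_1+L_2)\,i_{L_1}'=-R\,i_{L_1}$. The single differential degree of freedom decays with rate $R/(L_1+L_2)$, so both inductances are differential parameters. This is a property of the trajectory, not of the coordinates, so no choice of basis removes it. The case you flag as the ``main obstacle'' is therefore a counterexample to the statement as literally worded, not a technicality that an adapted basis can handle.

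The paper's proof does not argue through cutset relations. It locates exactly where $\forsterphiL$ enters $\forstermb{f}_0$: by~\cite[Eq.~(18b)]{forster2026a}, the inductor equations enter the differential part only through a projection with $\forsterW[L]$. Hence an (uncoupled) inductor's model drops out of $\forstermb{f}_0$ precisely when its row of $\forsterW[L]$ vanishes, and since inductor models never enter $\forstermb{f}_1$, this suffices. The remaining work is graph-theoretic: the zero rows of $\forsterW[L]$ are the inductors lying in no $\forstermr{L}$-$\forstermr{I_s}$ loop of the graph obtained by contracting all other elements. In the example, contracting $R$ merges $b$ and $c$, so $L_1,L_2$ form such a loop and their rows are nonzero. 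The paper's concluding graph argument only justifies the direction from a zero row to membership in an $\forstermr{L}$-$\forstermr{I_s}$ cutset, and the example shows the converse fails. So the missing idea in your proposal is this zero-row criterion, and any correct proof needs a correspondingly stronger hypothesis. A convenient sufficient one is that the inductor be the \emph{only} inductor in some $\forstermr{L}$-$\forstermr{I_s}$ cutset. Then your KCL relation fixes its current entirely through $\forsteris$, every admissible direction of $\forsteriL$ has a zero entry there, and your argument goes through.
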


\begin{proposition}[Algebraic resistors]
    \label{prop:ar}
    Similar to capacitors, resistors contribute only algebraic parameters if there exists a $\forstermr{V_s}$-only path connecting their terminals.
\end{proposition}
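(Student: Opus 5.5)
The plan is to show that when a resistor's terminals are joined by a $\forstermr{V_s}$-only path, every occurrence of that resistor's constitutive function in the decoupled system lands in the algebraic parts \cref{eq:forster_dec.2,eq:forster_dec.3}, never in $\forstermb{f}_0$. The argument has three steps, in order.

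First, use the $\forstermr{V_s}$-only path to pin down the resistor's branch voltage. Let $\mathcal{P}$ be the $\forstermr{V_s}$-only loop closing the resistor branch. Kirchhoff's voltage law, which is encoded by \cref{eq:forster_mna.3}, gives
\begin{align*}
    \forstermb{a}_\forstermr{R}^\forsterT \forstervphi = \pm \sum_{k \in \mathcal{P}} \pm v_{\forstermr{s},k}(t).
\end{align*}
Here $\forstermb{a}_\forstermr{R}$ is the resistor's incidence column. So the voltage across the resistor is a known function of time, independent of all parameters. Consequently the resistor current $g(\forstermb{a}_\forstermr{R}^\forsterT\forstervphi, \dots, t, p_R)$ depends on $p_R$ only through a quantity fixed by the sources. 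The same loop argument should also hold when $g$ depends on the other controlling quantities allowed in \cref{eq:forster_mna.1}, provided those quantities do not themselves depend on the resistor. Whether that holds still has to be checked.

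Second, trace where the resistor's current appears in the topological decoupling of~\cite{forster2026a}. That decoupling uses projectors onto kernels of incidence matrices, for instance $\forstermb{Q}_\forstermr{C}$, $\forstermb{Q}_{\forstermr{V}-\forstermr{C}}$ and their complements. The differential equations $\forstermb{f}_0$ arise from KCL projected onto node combinations that see capacitors, together with the inductor equations \cref{eq:forster_mna.2}. The plan is to perform a change of the spanning tree. Because the terminals are $\forstermr{V_s}$-connected, the resistor branch forms a loop with voltage sources, and it can be treated like a $\forstermr{V_s}$ branch itself. Its current is then absorbed into the currents $\forsteriVs$ of that loop. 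Concretely, the term $\forstermb{a}_\forstermr{R}\, g$ can be rewritten as $\forsterA[V_s]\,\forstermb{c}\, g$ for a fixed vector $\forstermb{c}$ encoding the path, plus a contribution that vanishes under the relevant projector. After this substitution, $g$ enters KCL only through the shifted variables $\tilde{\forstermb{i}}_{\forstermr{V_s}} = \forsteriVs + \forstermb{c}\, g$. In the decoupling these are index-one or index-two algebraic unknowns belonging to $\forstermb{x}_1$ or $\forstermb{x}_2$. Hence $\forstermb{f}_0$ and $\forstermb{f}_1$ restricted to the other variables are unchanged, and $p_R$ appears only in the equations defining the $\forstermr{V_s}$ currents.

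Third, compare with \cref{prop:ac}. The capacitor case presumably uses exactly this mechanism, with $\forsterdd{t}\forstermb{q}_\forstermr{C}$ of a prescribed voltage. The resistor version is simpler because no derivative is involved, so I would mirror that proof.

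The main obstacle is making the second step rigorous against the specific projector construction of~\cite{forster2026a}. One must verify that the differential variables $\forstermb{x}_0$ are unaffected by the redefinition of the $\forstermr{V_s}$ currents. One must also check that the resistor does not feed $\forstermb{f}_0$ through controlled sources, whose controlling arguments $\forsterAT[R]\forstervphi$ could in principle include this branch. I would first try to settle this by showing that $\forstermb{f}_0$ is invariant under adding columns of $\forsterA[V_s]$ to the current balance. That invariance holds because $\forsterA[V_s]$ lies in the range eliminated when $\forstermb{f}_0$ is formed. The controlled-source issue I would handle by noting that the branch voltage is parameter-free by the first step, so any dependence on $p_R$ through controlling voltages disappears as well.
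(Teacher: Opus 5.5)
Your proposal is correct and is essentially the paper's argument. The $\forstermr{V_s}$-only path gives $\forstermb{a}_\forstermr{R} = \forsterA[V_s]\forstermb{c}$ exactly, with no remainder term, which is the zero row of $\forsterQT[V_s]\forsterA[R]$ that the paper uses; resistors enter both the differential part \cite[Eq.~(18)]{forster2026a} and the index-one algebraic part \cite[Eqs.~(10) and (12)]{forster2026a} only through $\bullet\,\forsterQT[V_s]\forsterA[R]$, so your ``absorb into $\forsteriVs$'' step is the same observation $\forsterQT[V_s]\forsterA[V_s]=\forstermb{0}$. Your first step (the branch voltage being fixed by the sources) and the controlled-source worry are not needed, since the resistor's value enters \cref{eq:forster_mna} only through $\forstergR$, and the projection alone removes it from $\forstermb{f}_0$ and $\forstermb{f}_1$.
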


The filtered buck from~\cref{fig:forster_filtered_buck} also serves as an example of~\cref{prop:ac}, as the capacitor $\forstermr{C_f}$ only contributes algebraic parameters. The proofs all closely follow the ideas presented in~\cite{forster2026a}, hence we simply refer to the relevant parts of~\cite{forster2026a} to avoid reintroducing cumbersome notation.

\begin{proof}[Proof of~\cref{prop:ac}]
    Recalling the differential part corresponding to~\cref{eq:forster_dec.1} that was derived in~\cite[Eq. (18)]{forster2026a}, we observe that the value of a given capacitor does not appear in the differential part if the matrix product $\forsterPT[C] \forsterQT[V_s] \forsterA[C]$ contains a zero row at the index corresponding to that capacitor. Here, $\forsterA[C]$ is the incidence matrix describing capacitors as defined in~\cite[Definition 2]{forster2026a} and $\forsterPT[C]$ and $\forsterQT[V_s]$ are basis matrices as defined in~\cite[Definition 3]{forster2026a} and the proof of~\cite[Theorem 1]{forster2026a}.

    Using insights from the proof of~\cite[Proposition 3]{forster2026a}, one then sees that it suffices to consider zero rows of $\forsterQT[V_s] \forsterA[C]$, as multiplying by $\forsterPT[C]$ may not introduce additional ones. According to the same proof, such zero rows are generated by capacitors that possess a $\forstermr{V_s}$-only path connecting their terminals. As capacitors do not appear in the algebraic part corresponding to~\cref{eq:forster_dec.2}, given by~\cite[Eqs. (4), (9), (10), (12) and (14)]{forster2026a}, this represents the only condition for their values to become algebraic parameters.
\end{proof}

\begin{proof}[Proof of~\cref{prop:ai}]
    For inductors, we observe a similar argument, however this time we require zero rows in $\forsterW[L]$, compare~\cite[Eq. (18b)]{forster2026a}. Looking at the definition of $\forsterW[L]$ from~\cite[Eq. (24)]{forster2026a} together with the proofs of~\cite[Theorem 1 and Proposition 3]{forster2026a}, we see that zero rows in $\forsterW[L]$ correspond to inductors that are not part of any $\forstermr{L}$-$\forstermr{I_s}$ loop in the graph that is created by contracting all circuit elements that are not inductors or independent current sources. This in turn implies that such zero rows correspond to $\forstermr{L}$-$\forstermr{I_s}$ cutsets in the original graph, as the inductors and independent current sources in question would have otherwise formed a loop with elements of different types in the original graph and therefore they would have been contracted away alongside these.

    As inductors also do not appear in the algebraic part, this again represents the only condition for their values to become algebraic parameters. Finally, we remark that for multiport or coupled inductors, such as those modeling transformers e.g., all of the relevant inductors need to be part of $\forstermr{L}$-$\forstermr{I_s}$ cutsets for their values to become algebraic parameters in general.
\end{proof}

\begin{proof}[Proof of~\cref{prop:ar}]
    Lastly, we note that the value of a given resistor does not appear in the differential part if that resistor corresponds to a zero row\footnote{In principle, zero rows in $\forsterPT[C] \forsterQT[V_s] \forsterA[R]$ suffice, however the extra zero rows induced by $\forsterPT[C]$ will not in general appear in the algebraic part. Thus, only zero rows of $\forsterQT[V_s] \forsterA[R]$ are relevant in practice.} in $\forsterQT[V_s] \forsterA[R]$, compare again~\cite[Eq. (18)]{forster2026a}. In fact, such zero rows also suffice for the same resistor to not appear in the algebraic part, as resistors only appear in products of the form $\bullet \forsterQT[V_s] \forsterA[R]$ in~\cite[Eqs. (10) and (12)]{forster2026a}, where $\bullet$ denotes a matrix of matching dimensions.
\end{proof}
\end{document}